\Crefname{thm}{Theorem}{Theorems}
\Crefname{prop}{Proposition}{Propositions}
\Crefname{lem}{Lemma}{Lemmas}
\Crefname{cor}{Corollary}{Corollaries}
\Crefname{defn}{Definition}{Definitions}
\Crefname{assump}{Assumption}{Assumptions}
\Crefname{conj}{Conjecture}{Conjectures}
\Crefname{alg}{Algorithm}{Algorithms}
\Crefname{appsec}{Appendix}{Appendices}
\Crefname{equation}{}{}
\Crefname{figure}{Fig.}{Figs.}
\theoremstyle{remark}
\newtheorem{thm}{Theorem} % a statement that has been proven to be true
\newtheorem{defn}{Definition} % an explanation of the mathematical meaning of a word
\begin{document}
%
% paper title
% Titles are generally capitalized except for words such as a, an, and, as,
% at, but, by, for, in, nor, of, on, or, the, to and up, which are usually
% not capitalized unless they are the first or last word of the title.
% Linebreaks \\ can be used within to get better formatting as desired.
% Do not put math or special symbols in the title.
\title{A Closed-form Solution for the Strapdown Inertial Navigation Initial Value Problem}

%%%%%%%%%%%%%%%%%%%%%%%%%%%%%%%%%%%%%%%%%%%%%%%%%%%%%%%%%%%%%%%%%%%%%%%%%%%%%%%
% The target for this paper is a technical note in IEEE Transaction on Automatic
% Control or Similar. We need to upload it to arkivx ASAP to stake our claim.
%
% Ben: Check for legibility. Check for comparison with rk4, what else do we need?
% Kartik: SE(3) Integrator paper, Lie group paper
% Melody: Work on cleaning up equations
% Include NXP/ Cognpilot references.
%
%%%%%%%%%%%%%%%%%%%%%%%%%%%%%%%%%%%%%%%%%%%%%%%%%%%%%%%%%%%%%%%%%%%%%%%%%%%%%%%
%
%
% author names and IEEE memberships
% note positions of commas and nonbreaking spaces ( ~ ) LaTeX will not break
% a structure at a ~ so this keeps an author's name from being broken across
% two lines.
% use \thanks{} to gain access to the first footnote area
% a separate \thanks must be used for each paragraph as LaTeX2e's \thanks
% was not built to handle multiple paragraphs
%

% \author{
%     \IEEEauthorblockN{James Goppert\IEEEauthorrefmark{1}, Li-Yu Lin\IEEEauthorrefmark{1}, Kartik Pant\IEEEauthorrefmark{1}, Benjamin Perseghetti\IEEEauthorrefmark{2}}
    
%     \IEEEauthorblockA{\IEEEauthorrefmark{1}School of Aeronautcial and \\Astronautical Engineering\\ Purdue University\\ West Lafayette, Indiana 47906 USA
%     \\\{jgoppert, lin1191, kpant\}@purdue.edu}\\
%     \IEEEauthorblockA{\IEEEauthorrefmark{2}
%     Rudis Laboratories\\
%     Dayton, Ohio 45342 USA
%     \\bperseghetti@rudislabs.com}
% }

\author{James Goppert, Li-Yu Lin, Kartik Pant, and Benjamin Perseghetti
\thanks{J. Goppert, L. Lin, and K. Pant are with the School of Aeronautics and Astronautics, Purdue University, West Lafayette, IN 47906, USA (e-mail: jgoppert@purdue.edu; lin1191@purdue.edu;  kpant@purdue.edu).}
\thanks{B. Perseghetti is with Rudis Laboratories, Dayton, OH 45342 USA (e-mail: bperseghetti@rudislabs.com).}}

\maketitle

% As a general rule, do not put math, special symbols or citations
% in the abstract or keywords.
\begin{abstract}
Strapdown inertial navigation systems (SINS) are ubiquitious in robotics and engineering since they can estimate a rigid body pose using onboard kinematic measurements without knowledge of the dynamics of the vehicle to which they are attached. While recent work has focused on the closed-form evolution of the estimation error for SINS, which is critical for Kalman filtering, the propagation of the kinematics has received less attention. Runge-Kutta integration approaches have been widely used to solve the initial value problem; however, we show that leveraging the special structure of the SINS problem and viewing it as a mixed-invariant vector field on a Lie group, yields a closed form solution. Our closed form solution is exact given fixed gyroscope and accelerometer measurements over a sampling period, and it is utilizes 12 times less floating point operations compared to a single integration step of a 4th order Runge-Kutta integrator. We believe the wide applicability of this work and the efficiency and accuracy gains warrant general adoption of this algorithm for SINS.
\end{abstract}

% Note that keywords are not normally used for peerreview papers.
\begin{IEEEkeywords}
Lie, Group, Mixed-invariant, Strapdown, Inertial, Navigation
\end{IEEEkeywords}
\IEEEpeerreviewmaketitle

\section{Introduction}
% what is a mixed invariant system
% definition from 

%% Strapdown Inertial Navigation Problem
\IEEEPARstart{T}{he} problem of strapdown inertial navigation systems (SINS) has been widely studied in the literature~\cite{titterton2004strapdown, wu2005strapdown, chang2021strapdown}. It involves estimating the position and orientation of a rigid body employing onboard kinematic measurements. These measurements are generally obtained using an inertial measurement unit (comprising an accelerometer and a gyroscope). These measurements provide the acceleration and angular velocity in the body frame. One can use these measurements and perform integration over time to calculate the position and orientation of the rigid body. The existing methods utilize numerical integration approaches such as Runge-Kutta \cite{BOGACKI199615, DORMAND198019} methods to solve this initial value problem \cite{titterton2004strapdown, wu2005strapdown, chang2021strapdown}. However, due to numerical integration errors, the predicted position and orientation can drift over time, even given fixed angular velocity and acceleration in the body frame. To solve this issue, navigation systems utilize additional aides in the form of GNSS sensors, magnetometers, speedometers, etc \cite{titterton2004strapdown}.      

%% Symmetry and Invariance of the system
The recent developments in symmetry-preserving observers on matrix Lie groups\cite{bonnabel2008symmetry,van2022equivariant, van2021autonomous} have demonstrated superior performance and stronger stability guarantees for various industrial applications. These applications include attitude estimation on $SO(3)$ \cite{bonnabel2008symmetry, roberts2011attitude, sanyal2008global,panttowards}, pose estimation on $SE(3)$\cite{vasconcelos2010nonlinear,hua2011observer}, homography estimation on $SL(3)$\cite{hamel2011homography}, etc. The recent works \cite{barrau2018invariant, barrau2016invariant, barrau2017three} show the convergence properties of the invariant observers for deterministic systems. The authors employ the log-linear property of the invariant error dynamics to prove the stability and optimality of the filter.
The authors in \cite{chang2021strapdown, chang2023sins} have utilized the special group of double direct isometries $SE_2(3)$ to solve the initial alignment problem for SINS. While recent works on invariant filtering have discovered a closed form for the exact evolution of error from a reference trajectory on $SE_2(3)$\cite{li2022closed, li2023errors, linapplication}, few papers address more efficient and accurate methods for propagating the reference trajectory itself.
% previous work, numerical integration rk4 etc
% runge kutta is the most popular algorithm, is there anything else used that we should mention?
% closed form for se2(3) error evolution, we have to make it clear, they are not providing a method for integration of the reference trajectory, this is for error evolution, which is mainly for kalman filter development
%% Our contribution 
In this work, we focus on the propagation step of the kinematics of the system with underlying symmetry properties. We provide a closed-form expression for the propagation model for a class of systems whose dynamics evolve on a matrix Lie group: mixed invariant systems. Mixed-invariant systems \cite{khosravian2016state, lin2022log-linear} can be used to describe a wide variety of kinematics systems. One of the most useful applications is in the description of rigid body motion. Here, the right-invariant vector field describes the acceleration of gravity, and the left-invariant vector field describes the body frame acceleration and angular velocity. In previous approaches, the evolution of SINS is being evaluated using numerical integration. We show that the theory of mixed-invariant systems on Lie groups can be leveraged to derive a closed-form solution for the SINS propagation problem. Since our method is exact, the time step has no impact on the accuracy of the solution. In addition, the computational complexity is 12 times less than for solving the same initial value problem with a Runge-Kutta 4th-order solver.

%% Layout of the paper 
The rest of the paper is organized as follows. In \Cref{sec:math} we provide the fundamental elements necessary for the description of our solution using mixed-invariant vector fields on Lie groups. In \Cref{sec:problem}, we formulate the SINS initial value problem as a mixed-invariant vector field. In \Cref{sec:contrib} we provide our closed-form solution for the SINS initial value problem. In \Cref{sec:sim} we show the comparison with Runge-Kutta integration schemes. Finally, \Cref{sec:conclu} summarizes the paper and proposes future directions.

\section{Mathematical Preliminaries}
\label{sec:math}
This section presents a concise description of the $SE_2(3)$ matrix Lie group and its associated Lie algebra used for deriving the closed-form propagation of the mixed-invariant dynamics of the SINS problem. A detailed description of matrix Lie groups can be found in \cite{hall2013lie,lee2012smooth}.

The group of double direct spatial symmetries $SE_2(3)$, can be represented as:
\begin{equation}
    X \coloneqq \begin{bmatrix}
        R & v & p \\
        0 & 1 & 0 \\
        0 & 0 & 1
    \end{bmatrix} \coloneqq 
    \begin{bmatrix}
        R & P \\
        0_{2\times 1} & I_2
    \end{bmatrix}  
\end{equation}
where $R \in SO(3)$ denotes the attitude of the body with respect to the world frame, $p \in \mathbb{R}^{3}$ denotes the position in the world frame, and $v \in \mathbb{R}^{3}$ denotes the translational velocity in the world frame. We write this as a block matrix for ease of manipulation later, where $P \equiv \begin{bmatrix}v\ p \end{bmatrix}$.

The corresponding $\mathfrak{se}_2(3)$ Lie algebra can be represented as:
\begin{equation}
    [x]^\wedge \coloneqq \begin{bmatrix}
        [\omega]^\wedge & u_1 & u_2 \\
        0 & 0 & 0 \\
        0 & 0 & 0 
    \end{bmatrix} \coloneqq
    \begin{bmatrix}
        \Omega & U \\
        0_{2\times1} & 0_{2\times 2}
    \end{bmatrix}   
\end{equation}
where $x = \begin{bmatrix}u_1 & u_2 & \omega\end{bmatrix}^T$, and $[\cdot]^\wedge$ indicates the wedge operator that maps elements from $\mathbb{R}^n$ to a corresponding Lie algebra with dimension $n$. $\omega \in \mathbb{R}^{3}$ denotes the angular velocity in the body frame, $\Omega \in \mathfrak{so}(3)$ is the corresponding skew-symmetric matrix of $\omega$, and $u_1 \in \mathbb{R}^{3}$, $u_2 \in \mathbb{R}^{3}$. We write this as a block matrix for ease of manipulation later, where $U \equiv \begin{bmatrix}u_1\ u_2 \end{bmatrix}$, $\Omega \equiv [\omega]^\wedge$

\section{Problem Formulation}
\label{sec:problem}
SINS propagation requires solving the initial value problem with kinematics given by: 
\begin{equation}
\label{eq:SINS}
\begin{aligned}
    \dot{R} = R \omega &&
    \dot{v} = R a + a_g &&
    \dot{p} = v
\end{aligned}
\end{equation}
where $a = \begin{bmatrix} a_x & a_y & a_z\end{bmatrix}^T$ represents the translational acceleration in the body frame, and $a_g = \begin{bmatrix} 0 & 0 & g\end{bmatrix}^T \in \mathbb{R}^{3}$ represents the fixed gravitational acceleration in the world frame. Note that our approach can be generalized to any $a$, and $a_g$.

These kinematics an be described using a mixed-invariant vector field.

\begin{defn}
A mixed-invariant system~\cite{khosravian2016state, lin2022log-linear} evolving on a Lie group $G$ is a system with the differential equation:
\begin{align}
    \dot{X} = M X + X N
\end{align}
where $X \in G$, and $M$, and $N$ can be any constant matrices, satisfying: $Ad_{X^{-1}} M + N \in \mathfrak{g}$, where $\mathfrak{g}$ is the associated Lie algebra of $G$.
\end{defn}

The solution of $\dot{X} = M X + X N$ is $X_t = e^{Mt} X(0) e^{Nt}$. $X(0) = \begin{bmatrix} R_0 & P_0 \\ 0 & I\end{bmatrix}$ is the initial value of the states, where $R_0$ is the initial rotation matrix and $P_0$ represents the initial position and translational velocity in the world frame. The challenge in finding a closed-form solution for the initial value problem is finding the matrix exponential of $e^{MT}$ and $e^{Nt}$.

For SINS, we will leverage the special structure of the M and N matrices.

\begin{align}
M \coloneqq \begin{bmatrix}
0_{3\times 3} & A_M\\
0_{2\times 3}& -B \\
\end{bmatrix}
\end{align}

\begin{align}
N \coloneqq \begin{bmatrix}
\Omega & A_N\\
0_{2\times 3} & B
\end{bmatrix}
\end{align}

\begin{align}
A_M \coloneqq \begin{bmatrix}
0 & 0 \\
0 & 0 \\
g & 0
\end{bmatrix}
A_N \coloneqq \begin{bmatrix}
a_x & 0 \\
a_y & 0 \\
a_z & 0
\end{bmatrix}
B \coloneqq \begin{bmatrix}
0 & 1 \\
0 & 0
\end{bmatrix}
\end{align}

\section{Main Contribution}
\label{sec:contrib}
Note, that achieving a closed-form solution is tractable since B is a nil-potent matrix. We will leverage this in our derivation of \Cref{thm:thm1}.

\begin{thm}
\label{thm:thm1}
The solution of the SINS initial value problem defined in \cref{eq:SINS} is given by:
\begin{align}
X(t) = \begin{bmatrix}
R_{r'} R_0 R_{l'} & R_{r'} R_0 P_N + (R_{r'} R_0 + P_M)(I + Bt) \\
0 & I
\end{bmatrix}
\end{align}
where:
\begin{align*}
&R_{l'} \coloneqq e^{\Omega t} \hspace{0.2cm} R_{r'} \coloneqq I_3 \\
&P_M \coloneqq P(0, A_Mt, -Bt) \hspace{0.2cm} P_N \coloneqq P(\Omega t, A_Nt, Bt) \\
&P(\Omega, A, B) = A + AB/2 + \Omega(C_1 I_2 + C_2 B) \\ &\hspace{2 cm} + \Omega^2 A (C_2 I_2 + C_3 B) \\
&\theta \coloneqq \sqrt{\omega^T \omega} \hspace{0.2cm} C_1 \coloneqq \dfrac{1 - \theta^2/2 - \cos(\theta)}{\theta^2}\\
&C_2  \coloneqq \dfrac{\theta - \sin{\theta}}{\theta^3} \hspace{0.2cm} C_3  \coloneqq \dfrac{\theta^2/2 - \theta^4/24 + \cos{\theta} - 1}{\theta^4}
\end{align*}

% A proof is provided in \Cref{appsec:proof_thm1}.
\end{thm}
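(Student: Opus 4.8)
The plan is to start from the factored solution $X(t) = e^{Mt}X(0)e^{Nt}$ recorded in \Cref{sec:problem} and to evaluate the two matrix exponentials in closed form by exploiting the shared block upper-triangular structure of $M$ and $N$ together with the nilpotency $B^2 = 0$. Both generators have the form $\begin{bmatrix}\Xi & A \\ 0 & \pm B\end{bmatrix}$ with $\Xi \in \{0,\Omega\}$, so each exponential is again block upper-triangular, $\exp\!\left(\begin{bmatrix}\Xi & A \\ 0 & \pm B\end{bmatrix}t\right) = \begin{bmatrix} e^{\Xi t} & Q \\ 0 & e^{\pm Bt}\end{bmatrix}$. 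The diagonal blocks are immediate: $e^{0\cdot t}=I_3$, which gives $R_{r'}=I_3$; $e^{\Omega t}$ is the Rodrigues rotation, giving $R_{l'}=e^{\Omega t}$; and because $B^2=0$ the series for $e^{\pm Bt}$ truncates to $I_2\pm Bt$ with no higher-order terms.

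First I would determine the off-diagonal block $Q$. By variation of constants it satisfies $\dot Q = \Xi Q + A\,e^{\pm Bt}$ with $Q(0)=0$, hence $Q(t) = \int_0^t e^{\Xi(t-s)}A\,e^{\pm Bs}\,ds$. Substituting $e^{\pm Bs}=I_2\pm Bs$ and the Rodrigues expansion of $e^{\Xi(t-s)}$ in powers of $\Xi$ reduces the integral to a short list of elementary integrals over $[0,t]$ of $1$, $s$, $\sin(\theta s)$, $\cos(\theta s)$, $s\sin(\theta s)$ and $s\cos(\theta s)$. Collecting the resulting scalar factors in front of $A$, $AB$, $\Xi A$, $\Xi AB$, $\Xi^2 A$ and $\Xi^2 AB$ is precisely what defines the function $P(\Xi t, At, \pm Bt)$ and its coefficients $C_1,C_2,C_3$. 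Setting $\Xi=0$ collapses this to $P_M = P(0,A_Mt,-Bt)=A_Mt - A_M B\,t^2/2$, the gravity/position block, while $\Xi=\Omega$ yields $P_N$.

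With $e^{Mt}$ and $e^{Nt}$ in hand, the final step is the block multiplication $e^{Mt}X(0)e^{Nt}$. The bottom-right block is $(I_2-Bt)(I_2+Bt)=I_2-B^2t^2=I_2$, which confirms that the result stays in the clean $SE_2(3)$ form and is exactly where nilpotency is indispensable (it is the concrete manifestation of the mixed-invariance condition that keeps $X(t)$ on the group). Multiplying the remaining blocks produces the rotation $R_{r'}R_0R_{l'}=R_0e^{\Omega t}$ and the translation block $R_{r'}R_0P_N + (P_0+P_M)(I_2+Bt)$, which I would then match term-by-term against the claimed expression.

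The main obstacle is the off-diagonal integral: correctly evaluating and bookkeeping the six elementary integrals and packaging them into $C_1,C_2,C_3$ is where essentially all of the work resides, since everything else is either the exponential of a skew or nilpotent block or a block product simplified by $B^2=0$. As an independent check I would differentiate the claimed $X(t)$ and verify directly that $\dot X = MX + XN$ with $X(0)=X_0$, and I would confirm that the small-angle limits of $C_1,C_2,C_3$ are finite so that the closed form is numerically well posed as $\theta\to 0$.
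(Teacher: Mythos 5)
Your proposal is correct and shares the paper's overall skeleton: factor the flow as $X(t)=e^{Mt}X(0)e^{Nt}$, exponentiate each block-upper-triangular generator exploiting $B^2=0$, and multiply the three block matrices. The genuine difference lies in how the off-diagonal block of each exponential is obtained. The paper computes the matrix powers $l^n=\bigl[\begin{smallmatrix}\Omega^n & \Omega^{n-1}A+\Omega^{n-2}AB\\ 0 & B^n\end{smallmatrix}\bigr]$ and sums the exponential series directly, splitting into even and odd indices and invoking $\Omega^{2n+1}=(-1)^n\theta^{2n}\Omega$ and $\Omega^{2n+2}=(-1)^n\theta^{2n}\Omega^2$ to collapse the scalar series into $C_1,C_2,C_3$. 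You instead use the Duhamel/variation-of-constants representation $Q(t)=\int_0^t e^{\Xi(t-s)}A\,e^{\pm Bs}\,ds$ together with the Rodrigues expansion, reducing the computation to six elementary integrals. Both routes are valid and produce the same six tensor structures $A$, $AB$, $\Omega A$, $\Omega AB$, $\Omega^2A$, $\Omega^2AB$; the integral form makes the time dependence and the $\theta\to 0$ limits transparent, while the series form avoids integration and delivers the coefficients directly as power series. Your proposed verification steps (differentiating the claimed $X(t)$ and checking small-$\theta$ limits) are not idle: they would expose that the printed $C_1$ and $C_3$ carry spurious $-\theta^2/2$ and $-\theta^4/24$ terms relative to what either derivation actually yields (the paper's own series gives $C_1=(1-\cos\theta)/\theta^2$ and $C_3=(\theta^2/2+\cos\theta-1)/\theta^4$), and that the translation block should involve $(P_0+P_M)(I+Bt)$, as you wrote, rather than the dimensionally inconsistent $(R_{r'}R_0+P_M)(I+Bt)$ appearing in the statement.
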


\begin{proof}
We wish to find a closed form for $X(t) = e^{lt} X(0) e^{rt}$. We note that $l$ and $r$ are not elements of the Lie algebra; however, they are of the form below:

\begin{equation*}
l \coloneqq \begin{bmatrix}
    \Omega && A\\
    0_{2\times3} && B
\end{bmatrix}
\end{equation*}
Since B is nil-potent, higher powers are the zero matrix, the square of $l$ is:
\begin{equation*}
l^2 = \begin{bmatrix}
    \Omega^2 && \Omega A + AB\\
    0_{2\times3} && B^2
\end{bmatrix}
\end{equation*}
for $n > 1$, the power of $l$ can be written as:
\begin{equation*}
l^n = \begin{bmatrix}
    \Omega^n && \Omega^{n-1} A + \Omega^{n-2}AB \\
    0_{2\times3} && B^n
\end{bmatrix}
\end{equation*}
We can write the exponential of $e^{l}$ as the series:
\begin{equation*}
e^{l} = \sum\limits_{n=0}^{\infty} \dfrac{l^n}{n!} = \begin{bmatrix}
    \sum\limits_{n=0}^{\infty} \dfrac{\Omega^n}{n!} & N \\
    0_{2\times3} & \sum\limits_{n=0}^{\infty} \dfrac{B^n}{n!}
\end{bmatrix}
\end{equation*}
where $N = A + \sum\limits_{n=2}^{\infty} \left[\dfrac{\Omega^{n-1} A}{n!} + \dfrac{\Omega^{n-2}AB}{n!}\right]$
Which can be rewritten as a summation from $0$ to $\infty$:
% where:
\begin{equation*}
N = A + \sum\limits_{n=0}^{\infty} \left[\dfrac{\Omega^{n + 1} A}{(n+2)!} + \dfrac{\Omega^{n}AB}{(n+2)!}\right] 
\end{equation*}
Now, we split the summation into even and odd terms:
\begin{align*}
&N = A + \sum\limits_{n=0}^{\infty}\left[ \dfrac{\Omega^{2n + 1} A}{(2n+2)!} + \dfrac{\Omega^{2n}AB}{(2n+2)!}\right] \\
& \hspace{1 cm} + \sum\limits_{n=0}^{\infty}\left[\dfrac{\Omega^{2n + 2} A}{(2n+3)!} + \dfrac{\Omega^{2n + 1}AB}{(2n+3)!} \right]     
\end{align*}
Using the property of the skew-symmetric matrices for $\Omega$ :
\begin{align*}
&\sum\limits_{n=0}^{\infty}  \dfrac{\Omega^{2n + 1} A}{(2n+2)!} =  \sum\limits_{n=0}^{\infty} \dfrac{(-1)^{n} \theta^{2n} \Omega A}{(2n+2)!} = C_1 \Omega A \\
&\sum\limits_{n=0}^{\infty}  \dfrac{\Omega^{2n+2} A}{(2n+3)!} = \sum\limits_{n=0}^{\infty} \dfrac{(-1)^{n} \theta^{2n} \Omega^2 A}{(2n+3)!} = C_2 \Omega^2 A\\
&\sum\limits_{n=0}^{\infty}  \dfrac{\Omega^{2n+1} AB}{(2n+3)!} = \sum\limits_{n=0}^{\infty} \dfrac{(-1)^{n} \theta^{2n} \Omega AB}{(2n+3)!} = C_2 \Omega AB\\
&\sum\limits_{n=0}^{\infty}  \dfrac{\Omega^{2n} AB}{(2n+2)!} =  \dfrac{AB}{2} + \sum\limits_{n=1}^{\infty} \dfrac{\Omega^{2n} AB}{(2n+2)!} \\ &\hspace{1 cm} =  \dfrac{AB}{2} + \sum\limits_{n=0}^{\infty} \dfrac{\Omega^{2n+2} AB}{(2n+4)!} \\
&\hspace{1 cm} =  \dfrac{AB}{2} + \sum\limits_{n=0}^{\infty} \dfrac{(-1)^{n} \theta^{2n} \Omega^{2} AB}{(2n+4)!} = \dfrac{AB}{2} + C_3 \Omega^2 A B 
\end{align*}
%where:
%\begin{align*}
%\theta &\equiv \sqrt{\omega^T \omega} \hspace{0.5cm} C_1 \equiv \dfrac{1 - \theta^2/2 - \cos{\theta}}{\theta^2}\\
%C_2 &\equiv \dfrac{\theta - \sin{\theta}}{\theta^3}
%\hspace{0.2 cm}C_3 \equiv \dfrac{\theta^2/2 - \theta^4/24 + \cos{\theta} - 1}{\theta^4}
%\end{align*}
We now arrive at the closed form of $P(\Omega, A, B)$:
\begin{align*}
P(\Omega, A, B) &= A + AB/2 + \Omega A (C_1 I_2 + C_2B) \\
&+ \Omega^2 A (C_2 I_2 + C_3 B)\end{align*}
The complete matrix exponential of $e^{lt}$ and $e^{rt}$ may now be computed by letting $l'=lt$ and $r'=rt$:
\begin{align*}
e^{lt} &= \begin{bmatrix}
    R_{l'} & P_N \\
    0 & I + Bt
\end{bmatrix} 
e^{rt} = \begin{bmatrix}
    R_{r'} & P_M \\
    0 & I - Bt
\end{bmatrix}
\end{align*}
% where:
% \begin{align*}
% R_{l'} &= e^{\Omega t}, \hspace{0.2cm} R_{r'} = I_3 \\
% P_N &= P(\Omega t, A_Nt, Bt), \hspace{0.2cm} P_M = P(0, A_Mt, -Bt)
% \end{align*} 
%
and the solution of $X(t)$ for strapdown inertial navigation propagation can be found:
\begin{align*}
X(t) &= e^{rt}X(0)e^{lt} \\ &= \begin{bmatrix}
R_{r'} R_0 R_{l'} & R_{r'} R_0 P_N + (R_{r'} R_0 + P_M)(I + Bt) \\
0_{2\times3} & I_2
\end{bmatrix}
\end{align*} 
\end{proof}

\section{Simulation Study}
\label{sec:sim}
In order to compare our closed-form solution in \Cref{thm:thm1} to a Runge-Kutta 4th order algorithm, we leveraged Casadi\cite{Andersson2019} to build equation graphs of both the Runge-Kutta 4th order integrator and our Lie group mixed-invariant closed-form solution. Our source code is open source and available for download \href{https://github.com/CogniPilot/cyecca}{https://github.com/CogniPilot/cyecca}.

For testing, we compared against a known solution of a circular trajectory in the presence of gravity. The velocity of the vehicle was $1$ m/s, the radius of the circular trajectory was $1$ m. The trajectory of the vehicles can be seen in \Cref{fig:trajectory}.

There was a significant 12x reduction in the number of floating point operations per propagation step. As seen in \Cref{fig:ops}. These operations were counted using the Casadi equation graph itself.

\begin{figure}
    \centering
    \includegraphics[width=\linewidth]{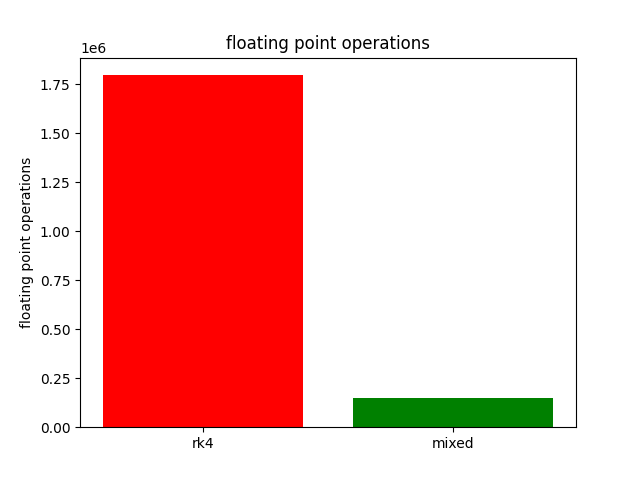}
    \caption{12 times fewer floating point operations are required for our closed form "mixed" invariant method vs. rk4 method as evaluated by counting operations in the Casadi equation graph}
    \label{fig:ops}
\end{figure}

\begin{figure}
    \centering
    \includegraphics[width=\linewidth]{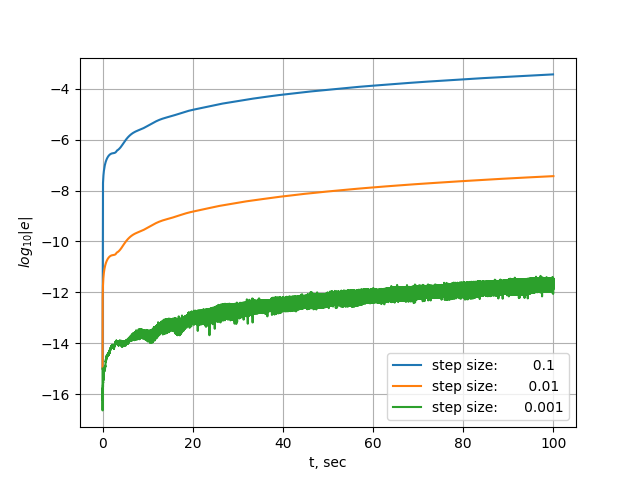}
    \caption{As the step size of the rk4 method is decreased, it converges to the ground truth and our closed form solution.}
    \label{fig:rk-step-err}
\end{figure}

\begin{figure}
    \centering
    \includegraphics[width=\linewidth]{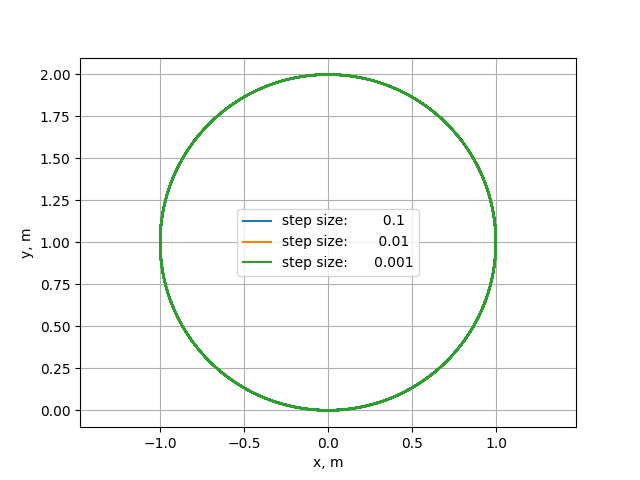}
    \caption{This is a circular ground truth trajectory used to compare the trajectories. At this scale, there are no visible discrepancies in the rk4 step size.}
    \label{fig:trajectory}
\end{figure}

\section{Conclusion}
\label{sec:conclu}

In this work, we derived an efficient and exact solution of the SINS initial value problem leveraging mixed-invariant vector fields on Lie groups. Given a fixed angular velocity and acceleration in the body frame, and a fixed acceleration in the world frame (gravity), we can exactly predict the evolution of the system. Compared to the industry standard Runge-Kutta 4th order integration scheme, our approach is more accurate and is also more efficient with 12 times fewer floating point operations necessary.
% regular conclusion1
% future work
Given our block matrix derivation, we also note that this closed form can be generalized to the $SE_n(3)$ Lie group given mixed-invariant kinematics of a form where $B$ is a nilpotent matrix which could be of interest for future research.

% if have a single appendix:
%\appendix[Proof of the Zonklar Equations]
% or
%\appendix  % for no appendix heading
% do not use \section anymore after \appendix, only \section*
% is possibly needed

% use appendices with more than one appendix
% then use \section to start each appendix
% you must declare a \section before using any
% \subsection or using \label (\appendices by itself
% starts a section numbered zero.)
%

% \appendices
% \crefalias{section}{appsec}
% \section{Proof of the Theorem 1}
% \label{appsec:proof_thm1}

% use section* for acknowledgment
\section*{Acknowledgment}

The authors would like to thank NXP for their support and contribution to the CogniPilot Foundation to help enable this work.

\bibliographystyle{IEEEtran}
% argument is your BibTeX string definitions and bibliography database(s)
\bibliography{ref.bib}

%\begin{IEEEbiography}
%[{\includegraphics[width=1in,height=1.25in,clip,keepaspectratio]{}}]{James Goppert}
%Biography text here.
%\end{IEEEbiography}

%\begin{IEEEbiography}
%[{\includegraphics[width=1in,height=1.25in,clip,keepaspectratio]{}}]{Benjamin Perseghetti}
%Benjamin Perseghetti is the founder and CEO of Rudis Laboratories and Co-Founder and Co-Chair of the CogniPilot Foundation. He received a BS in BME from Wright State University and paused his Ph.D. to continue work in industry. After working in the Controls Center of Excellence in the Air Force Research Laboratories, he worked with semiconductor companies Intel, Nvidia, and NXP to bring forward their autonomous vehicle platform solutions.
%\end{IEEEbiography}

% if you will not have a photo at all:
%\begin{IEEEbiographynophoto}{Li-Yu Lin}
%Biography text here.
%\end{IEEEbiographynophoto}

%\begin{IEEEbiographynophoto}{Kartik Pant}
%Biography text here.
%\end{IEEEbiographynophoto}

% insert where needed to balance the two columns on the last page with
% biographies
%\newpage

% You can push biographies down or up by placing
% a \vfill before or after them. The appropriate
% use of \vfill depends on what kind of text is
% on the last page and whether or not the columns
% are being equalized.

%\vfill

% Can be used to pull up biographies so that the bottom of the last one
% is flush with the other column.
%\enlargethispage{-5in}

% that's all folks
\end{document}